\documentclass[twocolumn,superscriptaddress,prl,floatfix]{revtex4-2}

\usepackage{graphicx}
\usepackage{amsmath}
\usepackage{amsthm}
\usepackage{amssymb}
\usepackage{latexsym}
\usepackage{array}
\usepackage{hyperref}
\usepackage{amsfonts}
\usepackage{dsfont}
\usepackage{mathrsfs}
\usepackage{verbatim}
\usepackage{bbold}
\usepackage[normalem]{ulem}
\usepackage{upgreek}
\usepackage{makecell}
\usepackage{adjustbox}
\usepackage{algorithm}
\usepackage{algpseudocode}
\usepackage{color}
\usepackage{bm}
\usepackage{times}
\usepackage{booktabs}
\usepackage{multirow}

\newcommand{\abs}[1]{\left|#1\right|}
\newcommand{\norm}[1]{\left\|#1\right\|}
\newcommand{\ket}[1]{\left|#1\right\rangle}
\newcommand{\bra}[1]{\left\langle #1\right|}

\newcommand{\ketbra}[2]{\ket{#1}\!\!\bra{#2}}
\newcommand{\Tr}{\operatorname{Tr}}
\newcommand{\hI}{\hat \openone}
\newcommand{\hrho}{\hat\rho}

\newcommand{\hU}{\hat U}
\newcommand{\hH}{\hat H}

\newcommand{\hD}{\hat{\mathcal D}}
\newcommand{\hcalU}{\hat{\mathcal U}}
\newcommand{\hM}{\hat{\mathcal M}}
\newcommand{\hE}{\hat{\mathcal E}}

\newcommand{\C}{\mathcal C}
\newcommand{\JS}{\mathrm{JS}}
\newcommand{\HV}{\mathrm{HV}}
\newcommand{\NSIT}{\mathrm{NSIT}}

\newcommand{\KL}{\mathrm{KL}}
\newcommand{\SM}{Supplemental Material}

\theoremstyle{plain}
\newtheorem{theorem}{Theorem}
\newtheorem{proposition}{Proposition}

\newtheorem{definition}{Definition}

\begin{document}

\title{Exact No-Signaling-in-Time without Temporal Classicality}

\author{Minsu~Kim}
\thanks{These authors contributed equally to this work.}
\affiliation{Quantum Sensors Research Center, Pusan National University, Busan 46241, Republic of Korea}

\author{Jeongho~Bang}\email{jbang@yonsei.ac.kr}
\thanks{These authors contributed equally to this work.}
\affiliation{Institute for Convergence Research and Education in Advanced Technology, Yonsei University, Seoul 03722, Republic of Korea}
\affiliation{Department of Quantum Information, Yonsei University, Incheon 21983, Republic of Korea}

\author{Han~Seb~Moon}\email{hsmoon@pusan.ac.kr}
\affiliation{Quantum Sensors Research Center, Pusan National University, Busan 46241, Republic of Korea}
\affiliation{Department of Physics, Pusan National University, Busan 46241, Republic of Korea}
\affiliation{Quantum Science Technology Center, Pusan National University, Busan 46241, Republic of Korea}

\date{\today}

\begin{abstract}
No-signaling-in-time (NSIT) has often been treated as the clean operational remnant of noninvasive measurability: if an earlier measurement leaves every later marginal unchanged, the temporal process appears classical. We show that this inference is false. We introduce common fixed points (CFPs) of nonselective measurement channels as an exact mechanism that erases all marginal evidence of invasiveness while preserving disturbed outcome-conditioned branches. For a qutrit ring subject to a local degenerate L\"uders measurement, we solve the full CFP manifold analytically. Every state on this manifold satisfies exact pairwise NSIT, yet every nontrivial member violates a Leggett--Garg inequality, including the maximally mixed state. The violation is governed by finite branch displacement, not by residual signaling. Hidden-variable reconstruction, entropic witnesses, protocol-landscape scans, noise robustness, and finite-shot simulations show that exact NSIT certifies only the disappearance of marginal signals after outcome erasure, not the existence of a classical temporal history.
\end{abstract}

\maketitle


{\em Introduction.}---A temporal test of quantumness begins with a conceptual trap. In a spatial Bell inequality scenario, one can ask whether separated outcomes admit a classical explanation. In a sequential experiment scenario, the act of asking the first question may itself prepare the answer to the next one. The macrorealism programs formalized this problem through inequalities, called Leggett--Garg inequalities (LGIs), satisfied by state-evolution histories with noninvasive measurability~\cite{LeggettGarg1985,Emary2014}. No-signaling-in-time (NSIT) later appeared to offer a more direct operational cure: compare the later distribution with and without an earlier measurement, after the earlier outcome has been ignored~\cite{KoflerBrukner2013,ClementeKofler2015}. If no temporal signal remains, what could still be invasive? This question compresses two logically different ideas: the operational disappearance of a signal in an averaged distribution and the macrorealist demand that the earlier interrogation reveal a pre-existing value without altering the subsequent history.

The literature has clarified parts of this logic, but it has left an interpretational ambiguity. NSIT was proposed as a necessary condition beyond LGIs~\cite{KoflerBrukner2013}; suitable complete NSIT packages, supplemented by the causal consistency condition that later choices cannot affect earlier marginals, can be necessary and sufficient for macrorealism~\cite{ClementeKofler2015}; and no set of LGIs alone can play the role that Bell inequalities play in Fine's theorem~\cite{ClementeKofler2016,Fine1982}. Experiments and theory have also shown that the temporal correlations are highly sensitive to measurement design, system dimension, and signaling constraints~\cite{BudroniEmary2014,Knee2012,Robens2015,Knee2016,Formaggio2016,Kreuzgruber2024,Emary2017,Wang2018}. Yet, these advances do not answer the microscopic question that makes NSIT tempting in the first place: when NSIT is exactly zero, has the measurement invasiveness really disappeared, or has it only been hidden by outcome erasure?~\footnote{This actually matters because NSIT is tested only after selective branches have been recombined.}.

We resolve this ambiguity by identifying a channel-level mechanism in which the marginal silence and branch disturbance coexist exactly. A measurement writes a classical outcome register, and ignoring this register produces a nonselective channel. The loophole is not an experimental imperfection but the exact erasure of the marginal evidence of invasiveness by the channel itself. To explain this, we introduce the notion of common fixed points (CFPs). A CFP is an ensemble that is restored by the combined evolution and outcome-erased measurement, even though the selective branches composing that ensemble may be nontrivially different. Specifically, we show that CFPs of a channel can make every pairwise NSIT test exactly silent, while the erased register has nevertheless prepared different selective branches. LGIs, hidden-variable completions, and entropic witnesses are sensitive to the future correlations of these branches. We thus claim that the correct implication chain is
\begin{eqnarray}
NSIT \not\Rightarrow \textrm{\em Noninvasive} \not\Rightarrow \textrm{\em Classical History}.
\label{eq:chain}
\end{eqnarray}
The first arrow is the central message: a silent marginal need not mean that every selective branch is left untouched; it may only mean that the branch disturbances cancel after the outcome register is erased. Our contribution is therefore not a small addition to the NSIT--LGI hierarchy; it is an exact dynamical mechanism showing where the usual inference from marginal silence to temporal classicality breaks down.



{\em Operational framework and CFP theorem.}---Let $\hrho$ be a density matrix and
\begin{eqnarray}
\hat{\mathcal U}_{\tau}(\hrho) = \hU(\tau) \hrho \hU^{\dagger}(\tau),
\quad
\hU(\tau) = \exp(-i\hH \tau / \hbar). 
\label{eq:unitary}
\end{eqnarray}
A dichotomic L\"uders instrument has the projectors $\hat{P}_{\pm}$ with $\hat{P}_{+} + \hat{P}_{-} = \hI$. The selective branch and nonselective channel are
\begin{eqnarray}
\hrho_{q | \hrho} = \frac{\hat P_q\hrho\hat P_q}{p_q},
\quad
\hD(\hrho) = \sum_{q=\pm} \hat{P}_q \hrho \hat{P}_q.
\label{eq:luders}
\end{eqnarray}
where $p_q = \Tr(\hat{P}_q \hrho)$. For times $t_1<t_2<t_3$, a pairwise context $(i,j)$ gives the joint probability $P^{(ij)}(q_i,q_j)$. The later marginal is $P^{(ij)}(q_j)=\sum_{q_i}P^{(ij)}(q_i, q_j)$, while $P^{(j)}(q_j)$ denotes the experiment in which only the later measurement is performed. Here, we define~\cite{KoflerBrukner2013}
\begin{eqnarray}
&& \NSIT_{i \to j} : P^{(ij)}(q_j) = P^{(j)}(q_j), \quad (q_j=\pm), \nonumber \\
&& \eta_{\NSIT} = \max_{i < j,q}\abs{P^{(ij)}(q) - P^{(j)}(q)}.
\label{eq:nsit}
\end{eqnarray}
NSIT is thus a marginal statement after outcome erasure. Note that it does not say that each branch in Eq.~(\ref{eq:luders}) is unperturbed.

LGI probes a different object. With $\hat{Q} = \hat{P}_{+} - \hat{P}_{-}$, let us consider a temporal correlation function,
\begin{eqnarray}
C_{ij} = \sum_{q_i,q_j = \pm1} q_i q_j P^{(ij)}(q_i, q_j).
\label{eq:cij}
\end{eqnarray}
If a single context-independent distribution $P(q_1, q_2, q_3)$ exists, then~\cite{LeggettGarg1985,Fine1982}
\begin{eqnarray}
K_1 = C_{12} + C_{23} - C_{13} \le 1
\label{eq:lgi}
\end{eqnarray}
and the other sign variants also hold. Thus, LGI violation excludes a single classical temporal history reproducing the observed pairwise contexts~\cite{Fine1982,ClementeKofler2016}.

\begin{definition}[Common Fixed Point]
Consider a nonselective sequential channel
\begin{eqnarray}
\hM_{\tau} = \hD \circ \hcalU_{\tau}.
\label{eq:Mdef}
\end{eqnarray}
For a set of protocol intervals $\mathcal{T}$, a state $\hrho^*$ is a common fixed point (CFP) if $\hM_{\tau}(\hrho^*)=\hrho^*$ for all $\tau \in \mathcal{T}$; it is global if this holds for every $\tau\ge0$.
\end{definition}

Physically, CFPs should not be confused with the measurements that leave every possible state unchanged. Such a demand would make an informative readout essentially trivial by the information--disturbance tradeoff. In a sequential test, the relevant object is instead the combined measurement--evolution channel after the outcome register is erased. A CFP is an ensemble invariant under that channel: its averaged state is restored, while the selective outcomes may still carry information and prepare distinct branches. Thus, CFPs realize effective noninvasive measurability only at the ensemble/channel level, not at a single macrorealist history.

We then provide the following theorem:

\begin{theorem}[CFP $\Rightarrow$ exact pairwise NSIT]
If $\hat\rho^*$ is a CFP for every interval connecting the measured times and for the zero-time readout dephasing, $\hM_0(\hat{\rho}^*)=\hD(\hat{\rho}^*)=\hat{\rho}^*$, then Eq.~(\ref{eq:nsit}) holds in every pairwise context and $\eta_{\NSIT}=0$.
\end{theorem}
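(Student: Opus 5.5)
The plan is to write both the two-time marginal $P^{(ij)}(q_j)$ and the single-time probability $P^{(j)}(q_j)$ as expectation values of $\hat P_{q_j}$ in explicitly constructed states. NSIT then reduces to showing that these two states have the same diagonal blocks with respect to $\{\hat P_\pm\}$. We take the initial state at $t_1$ to be $\hrho^*$, and set $\tau_{ab}=t_b-t_a$.

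\textbf{Step 1 (single-time side).} In the experiment that measures only at $t_j$, the state evolves freely from $t_1$ to $t_j$. This gives
\begin{eqnarray}
P^{(j)}(q_j)=\Tr\!\left[\hat P_{q_j}\,\hcalU_{\tau_{1j}}(\hrho^*)\right].
\end{eqnarray}
Because $\Tr[\hat P_q X]=\Tr[\hat P_q \hD(X)]$ for every operator $X$, we may insert $\hD$ for free:
\begin{eqnarray}
P^{(j)}(q_j)=\Tr\!\left[\hat P_{q_j}\,\hM_{\tau_{1j}}(\hrho^*)\right]=\Tr\!\left[\hat P_{q_j}\hrho^*\right],
\end{eqnarray}
where the last equality uses the CFP hypothesis for the interval $\tau_{1j}$. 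For $j=1$ no evolution occurs, and the statement is trivial.

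\textbf{Step 2 (two-time side).} In context $(i,j)$ the state first evolves to $t_i$, giving $\hcalU_{\tau_{1i}}(\hrho^*)$. Summing over $q_i$ after the L\"uders measurement produces $\hD\circ\hcalU_{\tau_{1i}}(\hrho^*)=\hM_{\tau_{1i}}(\hrho^*)=\hrho^*$. For $i=1$ this step instead uses the zero-time assumption $\hM_0(\hrho^*)=\hD(\hrho^*)=\hrho^*$, which is exactly why that hypothesis is included. The state then evolves by $\tau_{ij}$, and we get
\begin{eqnarray}
P^{(ij)}(q_j)=\Tr\!\left[\hat P_{q_j}\hcalU_{\tau_{ij}}(\hrho^*)\right]=\Tr\!\left[\hat P_{q_j}\hM_{\tau_{ij}}(\hrho^*)\right]=\Tr\!\left[\hat P_{q_j}\hrho^*\right].
\end{eqnarray}
Here the middle equality again inserts $\hD$ inside the trace, and the last uses the CFP property for $\tau_{ij}$. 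Comparing with Step 1 gives $P^{(ij)}(q_j)=P^{(j)}(q_j)$ for both values of $q_j$, in every pair $i<j$. Hence $\eta_{\NSIT}=0$.

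\textbf{Main obstacle.} The only delicate point is bookkeeping: making precise which intervals must belong to $\mathcal T$. The set needed is $\{0,\tau_{12},\tau_{23},\tau_{13}\}$, which covers the elapsed times $t_j-t_1$ as well as the gaps $t_j-t_i$. One must also make clear that marginalizing over $q_i$ is exactly the nonselective channel $\hD$; this follows from linearity, since $\sum_{q_i}p_{q_i}\hrho_{q_i|\cdot}=\hD(\cdot)$.

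The argument uses only the CFP identities and the invariance of $\Tr[\hat P_q\,\cdot\,]$ under $\hD$. It says nothing about the individual branches $\hrho_{q_i|\cdot}$, which may differ from $\hrho^*$. That is consistent with the paper's point that NSIT holds while the branches remain disturbed.
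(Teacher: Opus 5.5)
Your proof is correct and follows the same route as the paper's. Both use $\Tr[\hat P_q X]=\Tr[\hat P_q\hD(X)]$ to turn the CFP condition into $\Tr[\hat P_q\hcalU_\tau(\hrho^*)]=\Tr[\hat P_q\hrho^*]$, and both use the zero-time condition so that erasing the earlier outcome returns $\hrho^*$; your bookkeeping is more explicit than the paper's, which writes $\hD(\hrho^*)$ and $\hU(\tau_{ij})$ directly and so silently skips the CFP steps at $\tau_{12}$ for context $(2,3)$ and at $\tau_{1j}$ for the single-time marginal.
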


\begin{proof}---For an arbitrary density matrix $\hat\sigma$, the interval channel is
\begin{eqnarray}
\hM_{\tau}(\hat\sigma)=\sum_{r=\pm}\hat P_r\hU(\tau)\hat\sigma\hU^{\dagger}(\tau)\hat P_r.
\end{eqnarray}
Thus, the CFP condition gives the later readout marginal with the unitary step made explicit,
\begin{eqnarray}
\Tr[\hat{P}_q \hat{\rho}^*] &=& \Tr[\hat{P}_q \hM_{\tau}(\hat{\rho}^*)] \nonumber\\
	&=& \Tr[\hat{P}_q \hU(\tau)\hat{\rho}^*\hU^{\dagger}(\tau)] \equiv P_{\tau}(q). 
\end{eqnarray}
In a pairwise NSIT test, the zero-time readout condition gives $\hD(\hat\rho^*)=\hat\rho^*$, so summing over the earlier outcome yields
\begin{eqnarray}
P^{(ij)}(q_j) &=& \Tr[\hat{P}_{q_j} \hU(\tau_{ij}) \hD(\hat{\rho}^*) \hU^{\dagger}(\tau_{ij})]  \nonumber \\
	&=& \Tr[\hat{P}_{q_j} \hU(\tau_{ij}) \hat{\rho}^* \hU^{\dagger}(\tau_{ij})] = P^{(j)}(q_j).
\end{eqnarray}
Therefore, Eq.~(\ref{eq:nsit}) holds for every pairwise context and $\eta_{\NSIT}=0$. The condition is sufficient, not necessary, because NSIT constrains only observed marginals and not the selective branches.
\end{proof}
The theorem proves exact marginal non-signaling, not branch non-invasiveness. We will quantify the hidden branch disturbance by
\begin{eqnarray}
B(\hrho) = \sum_{q=\pm} p_q \frac{1}{2} \norm{\hrho_{q | \hrho} - \hrho}_1.
\label{eq:Bdef}
\end{eqnarray}
The central regime of this Letter is $\eta_{\NSIT}=0$ with $B(\hrho)>0$.


{\em Qutrit ring and full CFP manifold.}---Consider a single excitation hopping coherently on a three-site ring~\cite{Yi2024},
\begin{eqnarray}
\hH = \hbar\Omega \bigl( \ketbra{1}{0} + \ketbra{2}{1} + \ketbra{0}{2} + \textrm{H.c.} \bigr),
\label{eq:H}
\end{eqnarray}
with the local degenerate measurement
\begin{eqnarray}
\hat{P}_{+} = \ketbra{0}{0},
\quad
\hat{P}_{-} = \hI - \hat{P}_{+},
\quad
\hat{Q} = \hat{P}_{+} - \hat{P}_{-}.
\label{eq:P}
\end{eqnarray}
This measurement asks only whether the excitation occupies site $0$; it does not distinguish sites $1$ and $2$. Here, we introduce
\begin{eqnarray}
\ket{a} =\ket{0},
\quad
\ket{b}=\frac{\ket{1}+\ket{2}}{\sqrt2},
\quad
\ket{c}=\frac{\ket{1}-\ket{2}}{\sqrt2}.
\label{eq:abc}
\end{eqnarray}
In this basis, we can write $\hH$ as a matrix form, such that
\begin{eqnarray}
\frac{\hH}{\hbar\Omega} =
\begin{pmatrix}
0 & \sqrt{2} & 0 \\
\sqrt{2} & 1 & 0 \\
0 & 0 & -1
\end{pmatrix},
\label{eq:Habc}
\end{eqnarray}
so $\ket{c}$ is a dark eigenstate and the $ab$ sector is invariant.

\begin{theorem}[Full global CFP manifold]
The global CFPs of Eq.~(\ref{eq:Mdef}) are exactly
\begin{eqnarray}
\hat\rho^*(\lambda) = \frac{\lambda}{2}\hat{\Pi}_{ab} + \left( 1-\lambda \right) \ketbra{c}{c} \quad (0 \le \lambda \le1),
\label{eq:cfp}
\end{eqnarray}
where $\hat\Pi_{ab} = \ketbra{a}{a} + \ketbra{b}{b}$. The point $\lambda=2/3$ is $\hI/3$.
\end{theorem}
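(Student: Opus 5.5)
The plan is to prove both inclusions directly, using the block structure of Eq.~(\ref{eq:Habc}). In the basis $\{\ket{a},\ket{b},\ket{c}\}$ we have $\hat P_+=\ketbra{a}{a}$ and $\hat P_-=\ketbra{b}{b}+\ketbra{c}{c}$, and $\hU(\tau)=\hU_{ab}(\tau)\oplus e^{i\Omega\tau}\ketbra{c}{c}$.

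\emph{Sufficiency.} I first check that every $\hat\rho^*(\lambda)$ of Eq.~(\ref{eq:cfp}) is a global CFP. The operator $\hat\Pi_{ab}$ is the projector onto an invariant subspace of $\hH$, and $\ketbra{c}{c}$ is an eigenprojector. Hence both commute with $\hU(\tau)$, and $\hcalU_\tau(\hat\rho^*)=\hat\rho^*$. Both operators are also block diagonal with respect to $\hat P_\pm$, so $\hD(\hat\rho^*)=\hat\rho^*$. Therefore $\hM_\tau(\hat\rho^*)=\hat\rho^*$ for all $\tau\ge0$. Setting $\lambda=2/3$ gives weights $1/3$ on each of $\ket{a},\ket{b},\ket{c}$, which is $\hI/3$.

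\emph{Necessity, step 1.} Take a global CFP $\hrho$. The case $\tau=0$ gives $\hD(\hrho)=\hrho$, so $\hrho$ has no coherence between $\ket{a}$ and the span of $\{\ket{b},\ket{c}\}$. I therefore write
\begin{eqnarray}
\hrho=p\ketbra{a}{a}+s_{bb}\ketbra{b}{b}+s_{cc}\ketbra{c}{c}+\left(s_{bc}\ketbra{b}{c}+\textrm{H.c.}\right).
\end{eqnarray}

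\emph{Necessity, step 2: equal populations in the $ab$ sector.} Compare the $\ketbra{a}{a}$ entry of $\hM_\tau(\hrho)$ with that of $\hrho$. The $\ket{c}$ and $\ket{b}\!\!\bra{c}$ parts never reach $\ket{a}$, and $\hD$ does not change this diagonal entry. Writing $u_{xy}(\tau)=\bra{x}\hU_{ab}(\tau)\ket{y}$, the entry is
\begin{eqnarray}
p\,\abs{u_{aa}(\tau)}^2+s_{bb}\,\abs{u_{ab}(\tau)}^2 .
\end{eqnarray}
Unitarity of $\hU_{ab}$ gives $\abs{u_{aa}}^2+\abs{u_{ab}}^2=1$. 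The fixed-point condition then becomes $(s_{bb}-p)\abs{u_{ab}(\tau)}^2=0$. Because the off-diagonal element $\sqrt2$ in Eq.~(\ref{eq:Habc}) is nonzero, $u_{ab}(\tau)=-i\sqrt2\,\Omega\tau+O(\tau^2)\neq0$ for small $\tau>0$. Hence $s_{bb}=p$.

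\emph{Necessity, step 3: no $bc$ coherence.} Compare the $\ketbra{b}{c}$ entry. This coherence lies inside the $\hat P_-$ block, so $\hD$ keeps it. After evolution the entry equals $s_{bc}\,u_{bb}(\tau)e^{-i\Omega\tau}$, and the fixed-point condition requires this to equal $s_{bc}$. I will use $\abs{u_{bb}(\tau)}^2=1-\abs{u_{ab}(\tau)}^2<1$ for small $\tau>0$. Then $\abs{u_{bb}e^{-i\Omega\tau}}<1$, which forces $s_{bc}=0$.

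\emph{Conclusion.} The state is now $\hrho=p\,\hat\Pi_{ab}+s_{cc}\ketbra{c}{c}$ with $2p+s_{cc}=1$ and $p,s_{cc}\ge0$. Setting $\lambda=2p\in[0,1]$ gives Eq.~(\ref{eq:cfp}).

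The main obstacle is bookkeeping rather than any conceptual difficulty. One must track exactly which matrix elements of $\hU\hrho\hU^\dagger$ survive the dephasing by $\hD$, and confirm that the $a$--$b$ coherences generated by the evolution do not spoil the fixed-point condition. For the reduced ansatz this holds because $p\,\hU_{ab}\hat\Pi_{ab}\hU_{ab}^\dagger=p\,\hat\Pi_{ab}$ once $s_{bb}=p$. Only two nondegeneracy facts about $\hU_{ab}$ are needed, namely $u_{ab}\neq0$ and $\abs{u_{bb}}<1$. Both follow from the short-time expansion, so the explicit eigenvalues of the $2\times2$ block need not be computed.
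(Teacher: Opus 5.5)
Your proof is correct and follows the same route as the paper's sketch. Sufficiency comes from $[\hat\rho^*(\lambda),\hH]=0$ together with block-diagonality; for necessity, $\hM_0$ enforces the measurement-block structure, and short-time conditions then equalize the $a$ and $b$ populations and remove the $b$--$c$ coherence. The only cosmetic difference is that you use exact finite-$\tau$ matrix elements and the unitarity of $\hU_{ab}$, needing just $u_{ab}(\tau)\neq 0$, whereas the paper expands $\hM_\tau(\hrho)=\hrho$ order by order in $\tau$.
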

\begin{proof}[Proof sketch]---Eq.~(\ref{eq:cfp}) is diagonal in the measurement blocks and proportional to $\hat{\Pi}_{ab}$ on the invariant $ab$ sector, while $\ket{c}$ is a dark eigenstate. Hence, $[\hrho^*(\lambda), \hH]=0$ and $\hD(\hrho^*)=\hrho^*$. Conversely, $\hM_0(\hrho)=\hrho$ enforces measurement-block structure. Expanding $\hM_{\tau}(\hrho)=\hrho$ near $\tau=0$ forces equal $a,b$ populations and removes the remaining $b$--$c$ coherence. The full algebra is in the \SM.
\end{proof}


\begin{figure}[t]
\includegraphics[width=0.23\textwidth]{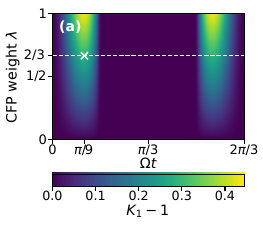}
\includegraphics[width=0.23\textwidth]{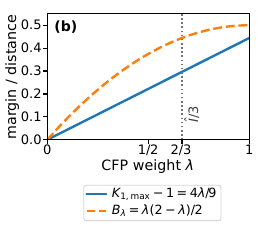}
\caption{Analytic separation on the CFP manifold. (a) Equal-spacing LGI margin $K_1-1$ over time and CFP weight; the white dashed line marks $\lambda=2/3$ and the cross marks $\Omega t=\pi/9$. (b) The exact maximum margin and branch displacement vanish only at the dark state; the vertical line marks $\lambda=2/3$.}
\label{fig:lgi_branch}
\end{figure}

{\em Exact NSIT but analytic LGI violation.}---For every CFP,
\begin{eqnarray}
P^{(j)}(+) = \Tr(\hat{P}_{+} \hrho^*) = \frac{\lambda}{2},
\quad
\eta_{\NSIT}=0.
\label{eq:eta}
\end{eqnarray}
If NSIT certified the temporal classicality, the story would end here. However, it does not. For the equal-spacing protocol $(t_1, t_2, t_3)=(0, t, 2t)$, the $+$ outcome prepares $\ketbra{a}{a}$ and
\begin{eqnarray}
A(t) = \abs{\bra{a}\hU(t)\ket{a}}^2 =\frac{5+4\cos(3\Omega t)}{9}.
\label{eq:A}
\end{eqnarray}
The one-interval probabilities are
\begin{eqnarray}
P_t(++) &=& \frac{\lambda}{2}A(t), \nonumber \\
P_t(+-) &=& P_t(-+) = \frac{\lambda}{2}\left[ 1-A(t) \right], \nonumber \\
P_t(--) &=& 1 - \lambda + \frac{\lambda}{2}A(t).
\label{eq:pair}
\end{eqnarray}
Thus, we have
\begin{eqnarray}
C_{\lambda}(t) = 1 - \frac{8\lambda}{9}\left[ 1 - \cos(3\Omega t) \right],
\label{eq:C}
\end{eqnarray}
and $C_{12} = C_{23} = C_{\lambda}(t)$, $C_{13} = C_{\lambda}(2t)$ give
\begin{eqnarray}
K_1(t) = 1 + \frac{16\lambda}{9}\cos(3\Omega t)\left[ 1 - \cos(3\Omega t) \right],
\label{eq:K}
\end{eqnarray}
with maximum
\begin{eqnarray}
K_{1,\max} = 1+\frac{4\lambda}{9}.
\label{eq:Kmax}
\end{eqnarray}
Figure~\ref{fig:lgi_branch} visualizes this result: the dashed $\lambda=2/3$ slice and the cross at $\Omega t=\pi/9$ lie in the positive-margin region, while panel (b) shows that both $K_{1,\max}-1$ and $B_\lambda$ vanish only at $\lambda=0$.
Every $\lambda>0$ therefore violates an LGI despite exact NSIT; the maximally mixed state gives $K_{1,\max}=35/27$.

\begin{proposition}[Continuum separation]
For every $\lambda \in (0,1]$, $\hrho^*(\lambda)$ satisfies the exact pairwise NSIT and violates an LGI. The dark state $\lambda=0$ is the unique CFP member that is both NSIT-silent and LGI-classical for the above protocol.
\end{proposition}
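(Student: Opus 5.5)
The plan is to assemble the proposition from the two theorems and the explicit correlation formulas already derived. For the NSIT half, Theorem~2 shows that every $\hrho^*(\lambda)$ with $0\le\lambda\le1$ is a global CFP. In particular it is fixed by $\hM_\tau$ for $\tau=t$ and $\tau=2t$, and also by $\hM_0=\hD$. Theorem~1 then gives exact pairwise NSIT with $\eta_{\NSIT}=0$ for every pairwise context of the equal-spacing protocol $(0,t,2t)$, as recorded in Eq.~(\ref{eq:eta}). Nothing further is needed for this half, and it covers $\lambda=0$ as well.

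For the LGI half, I would first confirm the branch structure behind Eq.~(\ref{eq:pair}). Since $\hrho^*(\lambda)$ is block diagonal, the $+$ branch is $\ketbra{a}{a}$ with weight $\lambda/2$. The $-$ branch is supported on $\mathrm{span}\{\ket{b},\ket{c}\}$, and only its $b$ component couples back to $\ket{a}$. Unitarity of the $ab$ block gives $\abs{\bra{a}\hU\ket{b}}^2=1-A$, so $P_t(-+)=\frac{\lambda}{2}(1-A)$. With Eq.~(\ref{eq:A}) this yields Eq.~(\ref{eq:C}).

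Substituting $C_{12}=C_{23}=C_\lambda(t)$ and $C_{13}=C_\lambda(2t)$, and writing $x=\cos(3\Omega t)$ with $\cos(6\Omega t)=2x^2-1$, gives $K_1-1=\frac{16\lambda}{9}x(1-x)$, which is Eq.~(\ref{eq:K}). On $x\in[-1,1]$ the factor $x(1-x)$ attains its maximum $1/4$ at $x=1/2$, for instance at $\Omega t=\pi/9$. Hence $K_{1,\max}=1+4\lambda/9>1$ for every $\lambda\in(0,1]$, which is an LGI violation.

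For the uniqueness statement at $\lambda=0$, the state is $\ketbra{c}{c}$. This state is an eigenstate of $\hH$ lying entirely in the $-$ block, so $P(q_1,q_2,q_3)=\delta_{q_1,-}\delta_{q_2,-}\delta_{q_3,-}$ is a single context-independent distribution reproducing all pairwise statistics. Consistently, $C_{ij}\equiv1$ and $K_1=1$, so every LGI variant is satisfied. Conversely, every other CFP member violates $K_1\le1$ at $\Omega t=\pi/9$. Since Theorem~2 says the manifold is exactly Eq.~(\ref{eq:cfp}), $\lambda=0$ is the unique member that is both NSIT-silent and LGI-classical for this protocol.

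The only delicate point is the scope of ``LGI-classical.'' I read it as ``no violation of the LGI family for the equal-spacing protocol over all $t$.'' Under that reading, the explicit violation of $K_1$ at a single $t$ suffices for every $\lambda>0$, and the trivial joint distribution handles $\lambda=0$. So the main care is in verifying Eq.~(\ref{eq:pair}), especially that the $c$ component never leaks into the $+$ outcome.
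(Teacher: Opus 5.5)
Your proposal is correct and is essentially the paper's own three-step argument. Exact NSIT comes from the CFP theorem applied to the global CFP manifold, the violation for $\lambda>0$ comes from $K_{1,\max}=1+4\lambda/9>1$, and LGI-classicality at $\lambda=0$ holds because $Q=-1$ deterministically in every context; your re-derivation of the one-interval probabilities and of $K_1-1=\frac{16\lambda}{9}x(1-x)$, and your explicit use of the exactness of the CFP manifold for uniqueness, fill in steps the paper leaves implicit.
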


\begin{proof}---Exact NSIT follows from the CFP theorem. For $\lambda>0$, Eq.~(\ref{eq:Kmax}) exceeds the classical bound. For $\lambda=0$, $Q=-1$ with unit probability in every context, so all LGI facets are obeyed.
\end{proof}

The violation is not disguised signaling. The selective branches are
\begin{eqnarray}
\hrho_{+ | \lambda} &=& \ketbra{a}{a}, \nonumber \\
\hrho_{- | \lambda} &=& \frac{\lambda\ketbra{b}{b} + 2(1-\lambda)\ketbra{c}{c}}{2-\lambda}, \label{eq:branches}
\end{eqnarray}
and $p_+\hrho_{+ | \lambda} + p_-\hrho_{- | \lambda} = \hrho^*(\lambda)$. The nonselective ensemble is exactly restored, but
\begin{eqnarray}
B_{\lambda} &=& \sum_{q=\pm}p_q\frac{1}{2}\norm{\hrho_{q | \lambda} - \hrho^*(\lambda)}_1 =\frac{\lambda(2-\lambda)}{2}.
\label{eq:Blambda}
\end{eqnarray}
Hence, $B_{\lambda}>0$ precisely for the nontrivial CFPs. The maximally mixed point is especially revealing: $\hI/3$ is featureless as an ensemble, but the degenerate measurement filters the $+$ branch into the pure state $\ketbra{a}{a}$, which evolves in the active $ab$ sector and generates temporal correlations incompatible with one classical history.


\begin{figure}[t]
\includegraphics[width=0.23\textwidth]{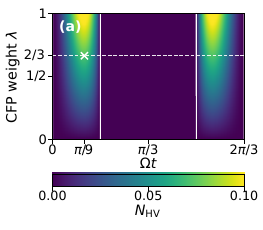}
\includegraphics[width=0.23\textwidth]{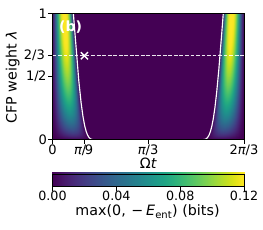}
\caption{Hard hidden-variable and entropic scans. (a) Full-manifold signed hidden-variable negativity $N_{\HV}$ from a $1201\times801$ compatibility calculation, with LP spot checks. (b) Entropic LGI violation strength $\max(0,-E_{\rm ent})$ on the same grid. White contours are zero-obstruction boundaries; dashed lines mark $\hat I/3$, and crosses mark $\Omega t=\pi/9$.}
\label{fig:hv_ent}
\end{figure}

{\em Hidden-variable reconstruction and entropic witness.}---To avoid relying on one correlator facet, we reconstruct the hidden-variable obstruction directly. Let $\Lambda=\{\pm1\}^3$. A context-independent model is a single distribution $P(q_1, q_2, q_3)$ whose pairwise marginals reproduce all measured contexts~\cite{Fine1982,AbramskyBrandenburger2011}. We performed a full-manifold compatibility scan by minimizing the signed hidden-variable negativity
\begin{eqnarray}
N_{\HV} = \min_{\mu : A\mu = P_{\rm obs}}\left[\frac{\norm{\mu}_1-1}{2}\right],        \label{eq:nhv}
\end{eqnarray}
where $\mu$ is a quasiprobability over the eight deterministic assignments and $A\mu=P_{\rm obs}$ imposes all pairwise probabilities. For this consistent three-binary-variable marginal problem the linear-program optimum equals $\max[0, \max_mK_m-1]/4$; we verified this equality by direct LP checks on representative grid points. Fig.~\ref{fig:hv_ent}(a) shows a $1201\times801$ compatibility scan: the obstruction is zero exactly in the classical regions and finite throughout the LGI-violating sectors. The white contours are $N_{\HV}=0$ boundaries, and the dashed $\lambda=2/3$ line with the cross at $\Omega t=\pi/9$ highlights the maximally mixed CFP; LP details are given in the \SM.

As a complementary, non-facet diagnostic, let $P^c_{\HV}$ be an eight-outcome completion of context $c$. The feasible set $\C(Q)$ fixes each observed pairwise marginal and the omitted one-time marginal. We minimize the Jensen--Shannon spread
\begin{eqnarray}
\Delta_{\JS} = \min_{\{P^c_{\HV}\} \in \C(Q)}\frac{1}{3}\sum_c D_{\KL}\left(P^c_{\HV} \| \bar{P}\right),
\label{eq:js}
\end{eqnarray}
where $\bar{P} = \frac{1}{3}\sum_c P^c_{\HV}$. Thus, $\Delta_{\JS}=0$ means one common completion exists, while a positive value measures residual context-dependence~\cite{Lin1991}; details are in the \SM. On representative slices, $\Delta_{\JS}$ peaks at $5.66 \times 10^{-3}$ for $\lambda=1/2$ and $7.75 \times 10^{-3}$ for $\lambda=2/3$.

The conclusion is not tied to correlators. The entropic LGI margin
\begin{eqnarray}
E_{\rm ent} = H(Q_3 | Q_2) + H(Q_2 | Q_1) - H(Q_3 | Q_1)
\label{eq:ent}
\end{eqnarray}
must be nonnegative in a classical temporal model~\cite{UshaDevi2013, Bang2018}. Fig.~\ref{fig:hv_ent}(b) plots $\max(0, -E_{\rm ent})$ over the full CFP-time grid; the white contour is the entropic boundary, and the dashed $\lambda=2/3$ slice enters finite, although narrower, nonclassical windows.


\begin{figure}[t]
\includegraphics[width=0.23\textwidth]{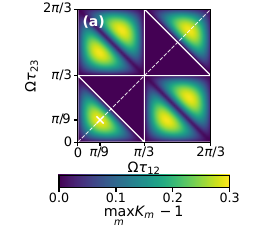}
\includegraphics[width=0.23\textwidth]{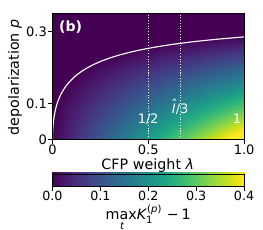}
\caption{Numerical stress tests. (a) Full two-interval landscape for the maximally mixed CFP; the dashed line is the equal-spacing slice and the cross is $\Omega\tau_{12}=\Omega\tau_{23}=\pi/9$. (b) Optimized depolarizing-noise phase diagram. The white curve is the classical boundary; dotted cuts indicate the representative CFP weights $\lambda=1/2,2/3,1$.}
\label{fig:protocol_noise}
\end{figure}

{\em Protocol landscape and robustness.--} The equal spacing is not a fine-tuned slice. For independent intervals $\tau_{12}$ and $\tau_{23}$, we scan all four LGI facets,
\begin{eqnarray}
\begin{array}{ll}
K_1=C_{12}+C_{23}-C_{13},	& K_2=C_{12}-C_{23}+C_{13},\\
K_3=-C_{12}+C_{23}+C_{13},	& K_4=-C_{12}-C_{23}-C_{13},
\end{array}
\label{eq:facets}
\end{eqnarray}
with $C_{13}=C_{\lambda}(\tau_{12} + \tau_{23})$. Fig.~\ref{fig:protocol_noise}(a) shows $\max_mK_m-1$ for $\lambda=2/3$ on a $1001 \times 1001$ grid. The white contours are classical boundaries; the dashed diagonal and cross identify the equal-spacing point $\Omega\tau_{12}=\Omega\tau_{23}=\pi/9$, which lies inside a finite violation island.

We next insert isotropic depolarization after each interval,
\begin{eqnarray}
\hE_p(\hrho) = (1-p)\hrho + \frac{p}{3}\hI,
\quad
\hat\Lambda_t = \hE_p \circ \hcalU_t.
\label{eq:noise}
\end{eqnarray}
Because $\hI/3=\hrho^*(2/3)$ is itself a CFP, this is a natural stress test of the branch-level temporal structure. Optimizing $K_1^{(p)}(t)$ over $1801$ time points on a $451 \times 451$ $(\lambda, p)$ grid gives Fig.~\ref{fig:protocol_noise}(b). The white curve is the optimized LGI boundary, and the dotted cuts give the representative thresholds below; noisy-channel and sampling details are in the \SM. The critical depolarization strengths are
\begin{eqnarray}
 p_c(\lambda=1/2) &\simeq& 0.254, \nonumber\\
 p_c(\lambda=2/3) &\simeq& 0.268, \nonumber\\
 p_c(\lambda=1) &\simeq& 0.286.
 \label{eq:pc}
\end{eqnarray}
Finite-shot Monte Carlo sampling at $\lambda=2/3$, $p=0.1$, and $\Omega t=\pi/9$ gives the true noisy value $K_1=1.178$; the $5$--$95\%$ band clears the classical bound at about $N=200$ shots per pairwise context (for more details, see the \SM).


{\em Discussion and conclusion.}---We have shown that exact NSIT can be a fixed-point shadow of a nonselective channel rather than evidence for a classical temporal process. The ensemble can look perfectly innocent precisely, because the invasive branches have been recombined; NSIT is therefore a statement about the channel after outcome erasure, not directly about a macrorealist trajectory. The qutrit ring provides a complete analytic separation: the full CFP manifold is known; pairwise NSIT vanishes identically on it; every nontrivial member violates an LGI with $K_{1,\max}=1+4\lambda/9$; the violation is organized by the branch displacement $B_\lambda=\lambda(2-\lambda)/2$; and hidden-variable, entropic, protocol, noise, and sampling analyses all locate the same branch-level obstruction.

This clarifies what earlier NSIT--LGI results did not settle. Complete NSIT-type packages can characterize the macrorealism, and LGIs alone are not a complete temporal analogue of Bell inequalities~\cite{ClementeKofler2015,ClementeKofler2016,Halliwell2017}. However, those logical statements do not justify interpreting the exact pairwise marginal NSIT as the branch-level noninvasive dynamics. Macrorealist noninvasiveness concerns access to a history, whereas pairwise NSIT concerns one class of marginals; CFPs show how these ideas separate in quantum dynamics. Likewise, prior demonstrations of LGI violations under restricted or ambiguous signaling conditions showed that signaling cannot always explain the temporal nonclassicality~\cite{Emary2017,Wang2018}. Our result goes further: it gives an explicit fixed-point mechanism, a full analytic state manifold including $\hI/3$, an exact zero NSIT residual, and a quantified branch resource responsible for the remaining contextual obstruction. An exactly silent marginal can therefore be the shadow of a disturbance rebalanced at the channel level.

The implication is immediate for temporal quantum information. Discarding an outcome register can erase every marginal trace of the measurement backaction while leaving the branch-conditioned histories incompatible with any context-independent model. NSIT is therefore an operational certificate of no marginal signaling after outcome erasure, not a certificate of branch-level non-disturbance or temporal classicality. Future work can thus use CFP engineering as a diagnostic tool: by deliberately making ensembles silent, one can expose the quantum temporal structure that survives only in conditioned branches.

\begin{acknowledgments}
{\em Acknowledgments.}---{We are grateful to Professor  C. Brukner for insightful comments on an earlier version of this manuscript and Professor  J. Yi for valuable discussions that helped formulate the initial idea of this work.} This work was supported by the Institute of Information and Communications Technology Planning and Evaluation (IITP) (RS-2024-00396999, IITP-2026-RS-2020-II201606, RS-2022-II221029, and RS-2025-25464481) and the National Research Foundation of Korea (RS-2023-00283146, RS-2024-00432214, RS-2024-00432214, RS-2025-03532992, and RS-2025-25454381).
\end{acknowledgments}

\bibliographystyle{apsrev4-2}
\bibliography{references_CFP}

\end{document}